\newtheorem{thm}{Theorem}[section]
\newtheorem{prop}{Proposition}
\newtheorem{assm}{Assumption}
\newtheorem{prob}{Problem}
\theoremstyle{remark}
\newtheorem{rem}{Remark}
\newcommand{\Real}{\mathbb R}
\newcommand{\T}{\mathcal{T}}
\newcommand{\Th}{\hat{\mathcal{T}}}
\newcommand{\x}{\mathbf{x}}
\newcommand{\xh}{\hat{x}}
\newcommand{\G}{\mathcal{G}}
\newcommand{\U}{\mathcal{U}}
\renewcommand{\O}{\Omega}
\newcommand{\Oh}{\hat{\Omega}}
\newcommand{\sC}{\mathcal{C}}
\newcommand{\xtb}{\tilde{x}}
\renewcommand{\a}{\alpha}
\newcommand{\ah}{\hat{\alpha}}
\newcommand{\at}{\tilde{\alpha}}
\newcommand{\cb}{\mathbf{c}}
\newcommand{\bv}{\mathbf{b}}
\renewcommand{\sb}{\bar s}
\title{\LARGE \bf An Optimization and Control Theoretic Approach \\ to Noncooperative Game Design
}
\author{Tansu Alpcan, Lacra Pavel and Nem Stefanovic
\thanks{T. Alpcan is with the Deutsche Telekom Laboratories, Technical University Berlin, 10587 Berlin, Germany.
        {\tt\small alpcan@sec.t-labs.tu-berlin.de}}%
\thanks{L. Pavel and N. Stefanovic are with the department of Electrical and Computer Engineering, University of Toronto, Toronto, ON, Canada.
        {\tt\small pavel@control.utoronto.ca, nem@control.utoronto.ca}}%
}
\begin{document}

\maketitle
\thispagestyle{empty}
\pagestyle{empty}


\begin{abstract}
This paper investigates design of noncooperative games from an optimization and control theoretic perspective. 
Pricing mechanisms are used as a design tool to ensure that the Nash equilibrium of a fairly general class of noncooperative 
games satisfies certain global objectives such as welfare maximization or achieving a certain level of quality-of-service (QoS). 
The class of games considered provide a theoretical basis for decentralized resource allocation and control 
problems including network congestion control, wireless uplink power control, and optical power control.

The game design problem is analyzed under different knowledge assumptions (full versus limited information) and 
design objectives (QoS versus utility maximization) for separable and non-separable utility functions. 
The  ``price of anarchy'' is shown not to be an inherent feature of full-information games that incorporate pricing mechanisms. 
Moreover, a simple linear pricing is shown to be sufficient for design of Nash equilibrium according to a chosen global 
objective for a fairly general class of games.  Stability properties of the game and pricing dynamics are studied under the 
assumption of time-scale separation  and in two separate time-scales. 
Thus, sufficient conditions are derived, which allow the designer to place the Nash equilibrium solution or 
to guide the system trajectory  to a desired region or point. The obtained results  are illustrated with a number of examples.
\end{abstract}

\section{Introduction} \label{sec:intro}

Game theory has been used extensively as a quantitative framework for studying communication networks and distributed control systems among its other applications in engineering and economics. Game theoretic models provide not only a basis for analysis but also for design of network protocols and decentralized control schemes~\cite{basargame,srikantbook}. Despite widespread use of  (noncooperative)
game theory in engineering, there is surprisingly little work on \textit{how to design games} such that their outcome satisfies certain global objectives. 

While there is a general agreement on the usefulness of game theory, the issues of \textbf{price of anarchy} or \textit{efficiency loss} associated with noncooperative games even under the existence of pricing mechanisms,
have been  the subject of many investigations ~\cite{efficiencyloss1,efficiencyloss2,priceanarchy}. Consequently, 
different pricing schemes have been proposed in the literature aiming to improve Nash equilibrium (NE) efficiency with respect to a chosen criterion in specific settings~\cite{saraydar1,saraydar2,srikantbook,gamecomm1, ifac08-qzhulacra}. The research community has revisited the issue of game design again very recently~\cite{johari1,hajek1}. These studies are limited either to special problem formulations or adopt specific efficiency criteria such as the ``system problem'' of~\cite{kelly1}. 
A related early line of work focuses specifically on three agent (player) dynamic noncooperative games with multi-levels of hierarchy~\cite{basar-3agentgame1,basar-3agentgame2}, where it has been shown that the leader has an optimal incentive policy which is linear in the partial dynamic measurement and induces the desired behavior on the two followers. In addition, a separate but substantial literature exists under the umbrella of implementation theory, especially in the field of economics, which focuses on finding fundamental bounds for games where the outcome satisfies some given criteria~\cite{maskin1}. These works, however, are often not algorithmic and do not have an engineering perspective.


The \textbf{main contribution} of this paper is a treatment of game design from an optimization and control theoretic perspective. A fairly general class of games are considered, which have been applied to a number of settings including network congestion control, wireless uplink power control, and optical power control~\cite{alpcan-twc,ifac08-lacra,alpcancdc04,tac06-pavel,cdc07-yanlacra}. This paper -to the best of our knowledge- constitutes one of the few efforts aiming to investigate the general problem of game design in a constructive manner \textit{from an optimization and control theoretic perspective}

While it is straightforward to optimize NE according to some criterion under full information, the problem is much more complicated under information and communication constraints. The game or system designer (Figure~\ref{fig:nedesign1}) usually does not have full information about the system parameters such as user preferences or utility functions. Under this kind of information constraints, the designer either deploys additional dynamic feedback mechanisms or requires side information from the system depending on the specific design objectives. An example for the former case is a dynamic pricing scheme operating as an ``outer feedback loop''. If the objective is to achieve a social optimum (e.g. maximization of sum of user utilities) or satisfying some quality of service (QoS) conditions, then the designer often needs limited but accurate (honest) information from users or the system. 

It is important to note that if the users have the capability of manipulating such side information, then the design problem can be more involved or even ill-defined. For example, the goal of reaching a social optimum without knowing true user utilities but having only access to manipulated data may not be a realistic or even feasible one~\cite{maskin1}. Although mechanisms such as VCG have been proposed to circumvent these issues, the resulting schemes are often limited and demanding in terms of communication requirements~\cite{johari1}. 
As a starting point, the treatment in this paper is  restricted to a class of games where \textbf{players do not manipulate the game} by deceiving the system designer. Such players are sometimes called ``\textit{price-taking}'' as opposed to manipulative or ``price-anticipating''.
Thus, the utility functions are assumed here to accurately reflect player preferences.

\begin{figure}[htp]
\centering
\includegraphics[width=0.9 \columnwidth]{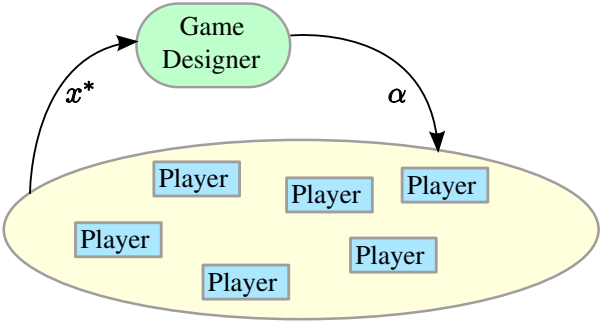}
\caption{The rules or pricing mechanisms within a game can be set by a ``designer'' to influence the outcome.
\label{fig:nedesign1}}
\end{figure} 

As an \textbf{additional contribution}, this paper analyzes dynamic systems arising from game formulations and the general case of non-separable player utilities. In such cases of coupled utility functions, the utility of each player is affected by decisions  of other players. The control theoretic approach adopted here based on game dynamics  provides a more realistic model for a number of applications compared to a static optimization one. Stability properties of game and pricing dynamics are investigated under the assumption of time-scale separation and in two separate time-scales using Lyapunov theory and singular perturbation approach. The sufficient conditions derived, which ensure system stability, are illustrated with examples. 



The rest of the paper is organized as follows. The next section introduces the underlying game model and problem formulation.
Section~\ref{sec:static} studies static design of games. In Section~\ref{sec:dynamic}, a dynamic control of games is discussed. Section~\ref{sec:incomplete} discusses game design under incomplete information followed by a numerical simulation in Section~\ref{sec:simulation}. The paper concludes with remarks of Section~\ref{sec:conclusion}.

%
%
%

\section{Game Model and Problem Formulation} \label{sec:model}

Consider a generic class of  $N$ player static noncooperative games, denoted by $\mathbf{\G0}$, on the action (strategy) space $\O \subset \Real^N$. The actions of players, who are autonomous decision makers, are denoted by the vector $x \in \O$, with $x_i$ being the   $i^{th}$  player's action. Furthermore, the $i^{th}$ player is associated with a smooth (continuously differentiable) cost function, $J_i:\Real \times \O \rightarrow \Real$,  $J_i(\a_i,x),\; i=1,2,\ldots, N$, parametrized by a scalar ``pricing'' or game parameter $\a_i \in \Real$. In some formulations, there may be (coupled) restrictions on the domain of these parameters such that $\a \in \tilde \O \subset \Real^N$. 
Assuming a set of sufficient conditions for the existence of at least one Nash equilibrium (NE) are satisfied, 
define a game mapping, $\T$ (an inverse game mapping $\Th$) that maps game parameters $\a$ to NE points (NE points to parameters):
\begin{equation} \label{e:map1}
  \T\,:\; \Real^N \rightarrow \O \;\;\;\text{and}  \;\;\;  \Th\,:\;  \O \rightarrow \Real^N,
\end{equation}
such that 
\begin{equation} \label{e:map2}
  x^* = \T(\a^*) \;\;\;\text{and}  \;\;\;  \a^*=\Th (x^*)
\end{equation}
for any NE point $x^*$ and corresponding parameter vector $\a^*$. Notice that the mappings $\T$ and $\Th$ are highly nonlinear, often not explicitly expressible, and may not be one-to-one or invertible except for special cases, i.e. games with special properties.

We now define \textbf{a class of games}, $\G1$, by assuming a cost structure of the form
\begin{equation} \label{e:cost1}
J_i(\a_i,x)= \a_i p_i(x) - U_i(x),
\end{equation}
where $\a_i \geq 0\; \forall i$, the continuously differentiable functions $p_i$ and $U_i$ are convex and strictly concave with respect to $x_i$ for any given $x_{-i}$, respectively. Let in addition the
strategy space $\O$ to be convex, compact, and nonempty. Thus, a game belongs to class $\G1$, if
the following assumptions are satisfied:
\begin{assm} \label{assm1}
The strategy space $\O$ is convex, compact, and has a nonempty interior, $\O \neq \varnothing$.
\end{assm}
\begin{assm} \label{assm2}
The cost function of the $i^{th}$ player $J_i$ in (\ref{e:cost1}) is twice continuously 
differentiable in all its arguments and strictly convex in $x_i$, i.e.,
$\partial^2 J_i(x)/ \partial x_i^2 > 0 $.
\end{assm}
Then, under the Assumptions~\ref{assm1} and \ref{assm2} the game $\G1$ admits a Nash equilibrium
from Theorem 4.4 in~\cite[p.176]{basargame}.


Next, we consider \textbf{a specific class of games}, $\G2$, as a special case of $\G1$ with additional conditions on the cost structure, such that they admit a unique NE solution for each given $\a$. Toward this end,
define the pseudo-gradient operator
\begin{equation}\label{e:psgrad}
 \overline \nabla J:= \left [\partial J_1(x) / \partial x_1 \cdots
  \partial J_N(x) / \partial x_N \right ]^T  := q(x).
\end{equation}
Subsequently, let the $N \times N$ matrix $Q(x)$ be the Jacobian of $q(x)$
with respect to $x$:
\begin{equation}\label{e:g1}
 Q(x):=
   \begin{pmatrix}
   b_1 & a_{12} & \cdots & a_{1N} \\
   \vdots &   & \ddots & \vdots \\
   a_{N1} & a_{N2}   & \cdots     & b_N \
 \end{pmatrix} \;,
\end{equation}
where $b_i$ and $a_{ij}$ are defined as
$b_i:=\frac{\partial^2 J_i(x)}{\partial x_i^2}$ and
$a_{i,j} :=  \frac{\partial^2 J_i(x)}{\partial x_i \partial x_j}$,
respectively. The following assumptions have to be satisfied in addition
to Assumptions~\ref{assm1} and \ref{assm2}, in order for a game to
be in class $\G2$:
\begin{assm} \label{assm3}
The symmetric matrix $Q(x)+Q(x)^T$, where $Q(x)$ is defined in~(\ref{e:g1}),
is positive definite, i.e. $Q(x)+Q(x)^T >0$ for all $x \in \O$.
\end{assm}

\begin{assm} \label{assm4}
The strategy space $\O$ of the class of games $\G2$ can be described as 
\begin{equation} \label{e:xdef}
 \O:=\{x \in \Real^N\,:\, h_j(x) \leq 0 ,\; j=1,\,2,\,\ldots r\},
\end{equation} 
where $h_j: \Real^N \rightarrow \Real, j=1,\,2,\,\ldots r$, $h_j(x)$ is convex in its arguments
for all $j$, and the set $\O$ is bounded and has a nonempty interior. In addition, the derivative of 
at least one of the constraints with respect to $x_i$,
$\{d h_j(x) / d x_i ,\; j=1,\,2,\,\ldots r\}$, is nonzero for $i=1,\,2,\,\ldots N$, $\forall x \in \O$.
\end{assm}

If Assumptions \ref{assm1}-\ref{assm4} are satisfied, then the class of games $\G2$ admits a unique
Nash equilibrium. Furthermore, this unique NE can be an inner or a boundary solution. We refer
to Appendix as well as \cite{rosen,basargame,tansuphd} for the details and
an extensive analysis. Notice that a large set of network games belong to this class with 
notable examples of network congestion games~\cite{srikantbook,alpcan-elektrik}, power control 
games in wireless networks~\cite{alpcan-twc} and optical networks~\cite{tac06-pavel}. 

Assume that the utility function $U_i$ accurately reflects the preferences of users or players, who
are \textit{autonomous (independent) decision makers}. Then, the pricing function $p_i$ and parameters $\a$ 
are in this setting the \textit{only} tools for the system designer to influence (optimize or control) 
the game outcome in order to achieve a desired objective. This task described in
Problem~\ref{problem1} is similar -in spirit- to the goal of implementation theory or mechanism design in the economics literature~\cite{maskin1} with the important difference of not allowing users to knowingly manipulate the system, i.e.
assuming price-taking users only. This game design problem  is illustrated in Figure~\ref{fig:ne1}.
\begin{figure}[htp]
\centering
\includegraphics[width=0.7\columnwidth]{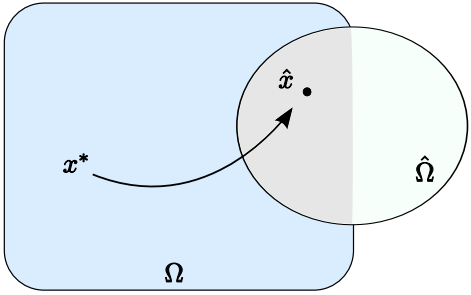}
\caption{Game design involves controlling the game dynamics such that the NE, $x^* \in \O$ is moved to a feasible desired region $\Oh \cap \O$ or a specific optimal point  $\xh$.
\label{fig:ne1}}
\end{figure} 

\begin{prob} \label{problem1}
How to choose the pricing function $p$ and parameters $\a$ such that the NE of games of class $\G1$ satisfies some desirable properties? 
\end{prob}
Two specific but common examples of such properties are
\begin{enumerate}
 \item The NE coincides with the solution of a global optimization problem, e.g. welfare maximization:
$$ x^*=\arg \max \sum_i U_i(x) \;\; \text{such that } x \in \O.\footnote{All summations in the paper are from $1,\ldots,N$ unless explicitly defined.}$$
\item The NE satisfies some system or user-dependent constraints such as capacity constraints, non-negativity, or performance bounds. For example, the favorable set $\Oh$ can be defined as 
$$\Oh:=\{x \in \Real^N : x_i\geq 0\; \forall i,\; \sum_i x_i \leq C,\; s_i(x)\geq \bar s_i \; \forall i \},$$
where $C$ is a capacity constraint, $s(\cdot)$ is a quality of service (QoS) measure such as signal-to-interference  ratio (SIR), and
$\bar s$ denotes the minimum acceptable QoS level.
\end{enumerate}

An important aspect of Problem~\ref{problem1} is the amount of knowledge available to the system designer 
in optimization of NE. If there is complete knowledge of player preferences and global system objective, then the approach to be adopted can be quite different from the one when the designer has very limited information. 
In cases when the game dynamics are very fast, it is appropriate to focus on static optimization of the NE point. Then, the actions of the system designer are assumed to be on a slower time-scale than the actual game dynamics between the players resulting in a hierarchical structure. This scenario is discussed in the next section.


\section{Static Design of Games} \label{sec:static}

When the game designer has  a way of estimating game parameters, the game design can be
posed as a static optimization problem, especially when the game dynamics are very fast and the actions of the system designer are on a slower time-scale than the actual game dynamics between the players. 
In contrast, Section~\ref{sec:dynamic} presents a dynamic control approach appropriate when the game dynamics are slow or there are external disturbances. Then, the game is treated as a control system that needs to be stabilized around a desired point.

When is it feasible to design a game such that the NE point can be located by the system designer to a point or region with desirable properties? In the point case, let target point be $\xh$. Then, the problem is to find an $\ah$ such that $\ah=\Th(\xh)$, for any desirable feasible $\xh$. The following surprisingly simple result addresses this problem for a fairly
general class of games.
\begin{thm} \label{thm:pricing}
For games of class $\G2$ with the cost structure given in~(\ref{e:cost1}) and under complete information assumption, affine pricing of the form, $\a\, p(\cdot)$, is sufficient to locate the unique NE point of the game to any desirable feasible point, $\xh \in \O$, as long as
$$
\dfrac{\partial p_i(\xh)}{\partial x_i} \neq 0, \;\; \forall i.
$$
\end{thm}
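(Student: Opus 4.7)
The plan is to exploit the fact that for class $\G2$ games the NE is the unique solution of the Karush--Kuhn--Tucker (KKT) system associated with the pseudo-gradient, and then to invert these conditions for the pricing parameters $\a$ at the chosen target $\xh$.

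First I would write down the characterization of the unique NE of a class $\G2$ game. Under Assumptions~\ref{assm1}--\ref{assm4}, each player's best-response problem is a convex program, so by KKT there exist multipliers $\mu_{i,j}\ge 0$ with $\mu_{i,j}h_j(x^*)=0$ such that for every $i$,
\begin{equation*}
\a_i\,\frac{\partial p_i(x^*)}{\partial x_i}-\frac{\partial U_i(x^*)}{\partial x_i}+\sum_{j=1}^{r}\mu_{i,j}\,\frac{\partial h_j(x^*)}{\partial x_i}=0.
\end{equation*}
Assumption~\ref{assm3} (strict monotonicity of the pseudo-gradient) guarantees uniqueness, so any $\a$ that makes a prescribed feasible $\xh$ satisfy this system \emph{is} the parameter vector $\Th(\xh)$ produced by the inverse game mapping of~(\ref{e:map2}).

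Next I would split into the interior and boundary cases. If $\xh$ lies in the interior of $\O$, all constraints are inactive and the multipliers vanish, leaving the decoupled scalar equations $\a_i\,\partial p_i(\xh)/\partial x_i=\partial U_i(\xh)/\partial x_i$. Under the stated hypothesis $\partial p_i(\xh)/\partial x_i\neq 0$, these are uniquely solvable by
\begin{equation*}
\ah_i=\frac{\partial U_i(\xh)/\partial x_i}{\partial p_i(\xh)/\partial x_i},\qquad i=1,\ldots,N,
\end{equation*}
and affine pricing $\ah_i\,p_i(\cdot)$ achieves the desired placement. For a boundary $\xh$, I would invoke Assumption~\ref{assm4}: some active constraint has $\partial h_j(\xh)/\partial x_i\neq 0$, which plays the role of a constraint qualification. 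Treating the Lagrange multipliers and the $\ah_i$'s jointly as unknowns in the KKT system above, complementary slackness fixes inactive multipliers to zero, and the remaining scalar equation for player $i$ can still be solved for $\ah_i$ (absorbing the boundary gradient contribution into the right-hand side) because $\partial p_i(\xh)/\partial x_i\neq 0$.

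Finally I would close the argument by appealing to sufficiency: once $\ah$ makes $\xh$ satisfy the KKT system, the uniqueness guaranteed by class $\G2$ (Assumption~\ref{assm3}) forces $\xh=\T(\ah)$, i.e. the NE coincides with the target. The main technical care, and the only place the proof is not completely mechanical, is the boundary case: one must verify that the KKT conditions at $\xh$ are consistent for some choice of nonnegative multipliers compatible with the sign of $\partial p_i(\xh)/\partial x_i$. The nondegeneracy in Assumption~\ref{assm4} together with the constructive formula for $\ah_i$ handles this, and convexity of the $h_j$ and of $p_i$ (together with strict concavity of $U_i$ in $x_i$) ensures that the resulting $\ah$ indeed renders each player's KKT conditions sufficient, not merely necessary.
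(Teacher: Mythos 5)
Your proposal is correct and follows essentially the same route as the paper: solve the player-wise first-order stationarity condition $\ah_i\,\partial p_i(\xh)/\partial x_i = \partial U_i(\xh)/\partial x_i$ for $\ah_i$ and invoke uniqueness of the NE for class $\G2$. The paper's one-line proof is exactly your interior case; your additional KKT discussion of boundary points is a refinement the paper omits, though it can be closed more simply by taking all multipliers equal to zero, since with your choice of $\ah_i$ the point $\xh_i$ is the unconstrained minimizer of the strictly convex map $x_i \mapsto J_i(\ah_i,(x_i,\xh_{-i}))$ and therefore also the constrained best response.
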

\begin{proof}
The proof immediately follows from the first order necessary optimality conditions of player cost optimization problems due to
the convexity of the cost structure and uniqueness of NE. 
$$
\a_i \dfrac{\partial p_i(\xh)}{\partial x_i} -  \dfrac{\partial U_i(\xh)}{\partial x_i} =0
\Rightarrow \ah_i=\left[ \dfrac{\partial p_i(\xh)}{\partial x_i}\right]^{-1}  \dfrac{\partial U_i(\xh)}{\partial x_i}\;\; \forall i
$$
for any feasible $\xh$.
\end{proof}

\begin{rem}
Theorem~\ref{thm:pricing} can easily be extended to the case where users actions are on a multi-dimensional subspace if the users utility function is separable.
\end{rem}

Notice that even a simple linear pricing function $p(x_i)=x_i$ satisfies the conditions of the theorem and is sufficient for NE optimization. In this case any  $\xh \in \O$ is feasible. However, a symmetric pricing scheme, where $\a_i=\a_j\; \forall i,j$, is not sufficient in general. As other examples,  for  $p(x_i)=e^{x_i}$ any  $\xh$ is feasible, while for  $p(x_i)=x^2_i$ any  $\xh \neq 0$ is feasible.

If a game admits multiple NE, e.g. games of class $\G1$, then reaching a single desirable point does not make much sense. Furthermore, the problem of locating all NE points to a desirable region can be rather complex. Such cases can be handled either by exploiting any  special structure of the game due to its specific problem domain or using numerical methods.

It is interesting to note that due to the form of the player cost structure in~(\ref{e:cost1}), i.e. additive linear pricing, the optimization problem of individual players in the game can be interpreted as a variant of \textit{Legendre transform}~\cite{legendre1}. In other words, given the other players' actions, the utility function best response of a player is transformed  to a best response function which takes the pricing parameter as argument.

Although tragedy of commons or price of anarchy are unavoidable in ``pure'' games without any external factors, they can be circumvented when additional mechanisms such as ``pricing'' are included in the game formulation. In parallel to some earlier results~\cite{rajiv1}, Theorem~\ref{thm:pricing} clearly establishes that \textit{``loss of efficiency'' or ``price of anarchy''} are not an inherent feature of a fairly general class of games with built-in pricing systems. 
If there is sufficient information, then any game of class $\G2$ can be designed through simple (e.g. linear) pricing mechanisms in such a way that any desirable criteria such as welfare maximization or QoS requirements are met at the unique NE solution. 





\subsection*{Example 1} \label{sec:expwireless}

In order to  illustrate the underlying issues in game design problems with accurate but limited information, consider a noncooperative
game formulation of the single-cell spread-spectrum uplink power control problem in wireless networks~\cite{winet}. 
Specifically, the game is played in a single cell with $N$ mobiles competing for quality of service in terms of 
signal-to-interference ratio (SIR) and the base station acts as the game designer.
This can be formulated as a class  $\G2$ game with the following general cost structure
\begin{equation} \label{e:expgameform1}
  J_i(\a_i,x)= \a_i p_i (x)  - U_i(x),\;\;  \forall i,
\end{equation}
where the objective is to locate the NE to a region that satisfies some feasibility and QoS constraints.
Further assume that the utility function $U_i$ of user $i$ in this game is
\begin{equation} \label{e:exSIR}
\begin{array}{l}
 U_i(x)=\beta_i \log(1+s_i(x)), \\ \\
\text{where } s_i(x):=\dfrac{h_i x_i}{\sum_{j \neq i} h_j x_j + \sigma^2}.
\end{array} 
\end{equation}
Here, $s$ represents the SIR with $h_i>0\; \forall i$ denoting channel gain parameters and $\sigma^2$ a noise term.
Then, each user $i$ decides on its own power level $x_i$ and is associated
with the cost function $J_i$  in~(\ref{e:expgameform1}).

The desired region for the NE of this game could be shaped by feasibility constraints such as positivity of user actions (here uplink transmission power levels) and an upper-bound on the sum of them, and/or some chosen minimum SIR levels (assuming these are chosen such that the region is not empty). A concrete example region $\Oh$ can be defined as
\begin{equation} \label{e:qosregion1}
 \Oh:=\{x \in \Real^N : x_i\geq 0,\; s_i(x)\geq \bar s_i \; \forall i \},
\end{equation}
where 
$\bar s_i$ are user-specific minimum SIR levels. 
A detailed analysis of an example case is provided next. For a separate but similar example of this formulation we refer to \cite{cdc07-yanlacra}.

As a special case, the pricing function $p_i(x)$ is chosen next to be linear in $x_i$ such
that  $$  J_i(\a_i,x)= \a_i x_i  - \beta_i \log(1+s_i(x)) ,$$
where $s_i(x)$ is defined in (\ref{e:exSIR}).
Under appropriate assumptions, the game is one of class $\G2$ and admits a unique inner NE solution, $x^*$.

For notational convenience, we define the matrix
\begin{equation} \label{e:matrix1}
A:= \begin{pmatrix}
  1 & \dfrac{h_2}{L h_1} & \dfrac{h_3}{L h_1} & \cdots & \dfrac{h_N}{L h_1} \\
  \dfrac{h_1}{L h_2} & 1 & \dfrac{h_3}{L h_2} & \cdots & \dfrac{h_N}{L h_2} \\
  \dfrac{h_1}{L h_3} & \dfrac{h_2}{L h_3}  & 1 & \cdots & \dfrac{h_N}{L h_3}  \\
   \vdots  &\vdots &   &  \ddots & \vdots \\
  \dfrac{h_1}{L h_N} &  \dfrac{h_2}{L h_N}  & \cdots & \dfrac{h_{N-1}}{L h_N}& 1
\end{pmatrix} 
\end{equation}
Then, the NE is the solution of
$$  A x^* = \cb, $$
where 
$$ \cb:=\left[ \frac{\beta_1}{\a_1} - \frac{\sigma^2}{L h_1}, \ldots,\frac{\beta_N}{\a_N} - \frac{\sigma^2}{L h_N}  \right].$$

The desired QoS region $\Oh$ in (\ref{e:qosregion1}) can be alternatively described  in terms of received power  levels at the
base stations and in matrix form~\cite{alpcan-winet2}:
$$ \Oh=\{x \in \Real^N : x_i\geq 0 \; \forall i, \; S x \geq \bv \}, $$
where the matrix $S$ is defined as
\begin{equation} \label{e:amatrix}
   S:=
   \begin{pmatrix}
   h_{1} & -h_{2}\dfrac{\sb_1}{L}& \cdots & \dfrac{-h_{N }\sb_1}{L} \\
   \dfrac{-h_{1 }\sb_2}{L} & h_{2 } & \cdots & \dfrac{-h_{N }\sb_2}{L} \\
   \vdots &   & \ddots & \vdots \\
    \dfrac{-h_{1 }\sb_{N}}{L} & \dfrac{-h_{2 }\sb_{N}}{L}   & \cdots   & h_{N} \\
 \end{pmatrix},
 \end{equation}
and
$$\bv:=\left[ \frac{\sb_1 \sigma^2}{L},  \ldots,  \frac{\sb_{N} \sigma^2}{ L} \right]^T .$$


If the designer, here the base station, has full information, then given a feasible target SIR level $\sb$
it is possible to solve for a pricing vector $\a$ such that the 
NE is on the boundary of $\Oh$, i.e. $S x = \bv$. This is due to both matrices $A$ and $S$ being nonsingular
as $h_i>0\; \forall i$. Hence, the appropriate pricing vector $\a$ can be immediately obtained from the
boundary solution
$$ \cb= A S^{-1}(\bv) , $$
and the definition of $\cb$.

\section{Dynamic Control of Games} \label{sec:dynamic}

Many games are solved in a distributed manner. Hence, the convergence of the system trajectory to the equilibrium point may not be very fast and the time-scale separation between system designers actions and actual game dynamics may fail. Then, the game design can be modeled as a \textbf{feedback control system} which utilizes pricing as the control input and the desired target as the reference (see Figure~\ref{fig:control1}). This formulation also brings a certain robustness with respect to initial conditions or game (system) parameters. The latter case is especially relevant for systems that are non-stationary over longer time periods and can also be formulated as a tracking problem. Specific examples are congestion and power control game formulations,~\cite{alpcan-twc,alpcan-elektrik,ifac08-qzhulacra}.
\begin{figure}[htp]
\centering
\includegraphics[width=0.9\columnwidth]{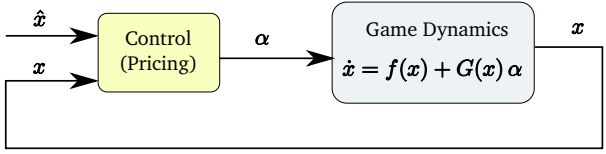}
\caption{Feedback control of the game (NE, $x^*$) using pricing $\a$ as the control parameter and $\xh$ as the desired reference signal.
\label{fig:control1}}
\end{figure} 

The counterpart of the feasibility question in the case of static game optimization relates  in the dynamic control setting to the \textbf{controllability} of the system shown in Figure~\ref{fig:control1}, or reachability of a state  $\xh$. In order to provide a concrete example to the problem of controllability, consider a game of class $\G2$ where the players adopt a gradient algorithm to optimize their own cost. Then, the game dynamics are:
\begin{equation} \label{e:system1}
   \dot{x}_i = - \dfrac{\partial J_i(x)}{\partial x_i}= \dfrac{\partial U_i(x)}{\partial x_i} - 
 \dfrac{\partial p_i(x)}{\partial x_i} \,\a_i \;\; \forall i,
\end{equation}
where $\a$ acts as the feedback control on the outcome of the game. Here, the objective is to investigate the conditions under which the game system is controllable. We write ~(\ref{e:system1}) in vector form as
\begin{equation} \label{e:system2}
   \dot{x} =  f(x) + \sum_{i=1}^{N} g_i(x) \, \a_i =  f(x) +  G(x) \, \a
 \end{equation}
where $\a = [\a_1 \dots \a_N]^T$, the matrix 
$$G(x) = \left [ g_1(x), \dots,  g_N(x) \right ] , $$ 
and the vector
$$
f(\x) = \left [
\begin{array}{lllll}
 \dfrac{\partial U_1(x)}{\partial x_1} & \dots & \dfrac{\partial U_i(x)}{\partial x_i} & \dots & \dfrac{\partial U_N(x)}{\partial x_N} \end{array} \right ]^T.
$$ Alternatively, the matrix $G$ is given by
$$
G(x) = \left [ \begin{array}{ccc} - \dfrac{\partial p_1(x)}{\partial x_1} & \dots & 0 \\ \dots & - \dfrac{\partial p_i(x)}{\partial x_i}  &\dots \\ 0 & \dots & -\dfrac{\partial p_N(x)}{\partial x_N} \end{array} \right ] .
$$

Based on the standard theorem on controllability using Lie brackets~\cite[Chapter 1]{isidoribook}, we obtain the following result. 

\begin{thm} \label{thm:controllab}
For games of class $\G2$ with cost structure~(\ref{e:cost1}) and  game dynamics ~(\ref{e:system1}), or ~(\ref{e:system2}),  a sufficient condition for local reachability around a point $x_0$ is that the distribution $\sC$ satisfies the rank condition at $x_0$, $dim \sC (x_0)=N$ where
$$
\sC = \left [ g_1, \dots,g_N, [g_i,g_j], \dots, [f,g_i], \dots, ad^{k}_f g_i,\dots  \right ]
$$
where $[f,g_i](x) = \dfrac{\partial g_i(x)}{\partial x} f(x) -   \dfrac{\partial f(x)}{\partial x} g_i(x)$ is the Lie bracket of  $f$ and $g_i$, and $ad^{k}_f g_i$ denote higher order Lie brackets defined recursively by $ad^{k}_f g_i(x) = [f, ad^{k-1}_f g_i](x)$.
\end{thm}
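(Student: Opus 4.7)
The plan is to recognize system~(\ref{e:system2}) as a standard input-affine nonlinear control system with drift $f(x)$ and $N$ control vector fields $g_1(x),\ldots,g_N(x)$, where the pricing parameters $\alpha_i$ play the role of scalar inputs, and then to invoke the nonlinear accessibility (local reachability) rank condition theorem from Isidori's book, which is exactly the result cited. Under the assumptions defining class $\G2$ (in particular Assumption~\ref{assm2}, twice continuous differentiability of $J_i$, which translates into smoothness of $U_i$ and $p_i$), the vector field $f$ and the columns $g_i$ of $G(x)$ are smooth, so Lie brackets of all orders are well defined on $\O$.

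First, I would explicitly identify $f$ and the $g_i$: from~(\ref{e:system1}), the drift coincides with the gradient of the utilities, namely $f_i(x)=\partial U_i(x)/\partial x_i$, while each $g_i(x)$ is the vector whose $i$th component is $-\partial p_i(x)/\partial x_i$ and whose other components are zero, matching the diagonal structure of $G(x)$ in the statement. Next, I would form the smallest involutive distribution $\sC$ containing $\{g_1,\ldots,g_N\}$ that is invariant under the Lie bracket with $f$ and with itself; this is precisely the accessibility algebra generated by the family, and its elements are exactly the iterated brackets listed in the statement ($g_i$, $[g_i,g_j]$, $[f,g_i]$, and higher-order terms $ad_f^k g_i$, together with nested brackets of these).

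Then I would apply the Chow/Hermann--Nagano accessibility rank condition (Theorem in Chapter 1 of~\cite{isidoribook}): if $\dim\sC(x_0)=N$, then the set of states reachable from $x_0$ in arbitrarily small time with piecewise constant inputs $\alpha(\cdot)$ contains a nonempty open subset of every neighborhood of $x_0$, which is the desired local reachability/accessibility property. Since the theorem is directly applicable once smoothness and the affine-in-control form are verified, the proof essentially reduces to a translation of the game-dynamics setup into the hypotheses of the cited theorem.

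The main obstacle, which is more conceptual than technical, is justifying that the result one gets from the standard theorem (accessibility, i.e.\ nonempty interior of the reachable set) is the correct notion of ``local reachability around $x_0$'' for the game-design purpose, and that the presence of the drift $f$ does not spoil the argument. In the presence of a drift the accessibility rank condition yields accessibility rather than small-time local controllability, so I would state the conclusion carefully as local reachability in Isidori's sense and note that strong accessibility/controllability would require additional structure (e.g.\ symmetry of the drift or the bracket-generating property using only $g_i$ and their iterated brackets with $f$ of even order), which is not claimed here. A minor secondary point is to note that the constraint set $\O$ from Assumption~\ref{assm4} must be respected, so the statement is genuinely local and holds for $x_0$ in the interior of $\O$.
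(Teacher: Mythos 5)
Your proposal is correct and follows essentially the same route as the paper, which simply invokes the standard Lie-bracket accessibility rank condition from Chapter 1 of Isidori after casting~(\ref{e:system2}) as an input-affine system with drift $f$ and control fields $g_i$. Your added care in distinguishing accessibility from small-time local controllability in the presence of drift is a useful clarification that the paper leaves implicit.
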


\begin{rem}
Notice that if the diagonal matrix $G(x_0)$ has rank $N$, any $\xh$ locally around $x_0$ is reachable in finite time under piecewise constant input functions, which is equivalent to the feasibility condition in Theorem~\ref{thm:pricing}. In addition, for the simple linear pricing function $p(x_i)=x_i$  any $\xh$ is   immediately reachable since  $G(\xh)$ is constant and invertible.
\end{rem}

Alternatively, the problem can be posed as one of  asymptotic set point regulation, i.e., to find a feedback control of the form $\a=\a(x, \xh)$ such that  the system trajectory $x$ (and eventually the NE) converges to the desired reference point $\xh$~\cite[Chapter 8]{isidoribook}. Considering the game system ~(\ref{e:system2}) and  the change of variables 
$\xtb = x - \xh$, in the new coordinates the game system becomes
\begin{equation} \label{e:system3}
   \dot{\xtb} =  f(\xtb + \xh) + G(\xtb + \xh) \, \a
 \end{equation}
and the design problem is to find control $\a$ to stabilize the equilibrium $\xtb = 0$. The first necessary condition, which translates into feasibility condition for $\xh$ is that ~(\ref{e:system3}) has an equilibrium at the origin, i.e. there exists a steady-state control  $\a_s = c(\xh)$ that solves
\begin{equation} \label{e:FBI}
 0 = f(\xh) + G(\xh) \, c(\xh)
\end{equation}
The component  $\a_s$ is the first component in $\a$ needed to maintain equilibrium at the origin, while a second component is needed to asymptotically stabilize this equilibrium in the first approximation.   Thus,  based on the necessary and sufficient conditions for asymptotic regulation problem (see Theorem 8.3.2 in~\cite{isidoribook}), specialized for constant reference, a feedback control law  for the pricing parameter that solves this problem in the full information case is $\a = \a(x, \xh)$  with
\begin{equation} \label{e:ctrl}
\a(x,\xh) = c(\xh) + K \, (x - \xh)
\end{equation} 
where $ c(\xh)$ solves ~(\ref{e:FBI}) and  $K$ selected such that $(A+B \, K)$ has eigenvalues with negative real part. The matrices $A$ and $B$ denote the Jacobians of $f$ and $g$ at the origin, respectively. Using ~(\ref{e:system2}) it can be seen that the feasibility condition ~(\ref{e:FBI}) is equivalent to feasibility in the static case (Theorem ~\ref{thm:pricing}). Also, for $G(\xh)$ full rank this feasibility condition of the regulation problem corresponds to  $\xh$ being reachable (see remark after Theorem ~\ref{thm:controllab}). Note also, that  for constant reference $\xh$, integral control could be included in the design formulation by augmenting the system ~(\ref{e:system2}) with a stack of N integrators
\begin{equation} \label{e:augm}
 \dot{\sigma} =  \xtb
\end{equation}
The  task is to design a stabilizing feedback controller $\a = \a(\xtb, \sigma)$  that stabilizes the augmented state model ~(\ref{e:system3},\ref{e:augm}) at the equilibrium point $(0, \sigma_s)$ , where $\sigma_s$ produces the desired steady-state control~\cite{isidoribook}.

\subsection*{Example 2}

Consider the power control problem in optical networks \cite{tac06-pavel} with linear pricing and an optical signal-to-noise ratio (OSNR)-like utility function. Similar to the wireless case, the players here compete for quality of service in terms of OSNR. and are 
coupled due to interference. The user $i$ chooses own power level $x_i$ based on the following cost function
\begin{equation} \label{e:OSNRp}
J_i(x) = \alpha_i x_i - \beta_i \log \left( 1+ a_i \frac{\gamma_i(x)}{1 -  \Gamma_{ii} \gamma_i(x) } \right)
\end{equation}
a special case of (\ref{e:cost1}), where  $\beta_i$, $a_i$ are design parameters and $\gamma_i(x)$ denotes the OSNR
\begin{equation}\label{e:OSNRdef}
\gamma_i(x) = \frac{x_i}{n_{0} + \sum_{j} \Gamma_{ij} x_j } 
\end{equation}
for a given system matrix $\Gamma$ and input noise $n_0$. Let $\tilde{\Gamma}$ be the matrix with elements 
\begin{equation}
\tilde{\Gamma}_{ij} = \left\{ \begin{array}{c} a_i, \hspace{0.6em} \quad i=j \\ \Gamma_{ij}, \quad i \neq j \end{array} \right. . \label{e:gamma}
\end{equation}
Then the corresponding dynamic system, similar to~(\ref{e:system2}), is
$$
\dot{x} = f(x) - \a
$$
where $f(x)$ is the vector with elements
$$ \label{e:fx}
\frac{\partial U_i(x)}{\partial x_i} = \frac{a_i \beta_i}{ n_{0} + \sum_{j} \tilde{\Gamma}_{ij} x_j  }
$$
or, in shifted coordinates, $\xtb = x - \xh$,
$$
\dot{\xtb} = f(\xtb + \xh) - \a
$$
Linearizing around the origin, $\xtb = 0$, yields $B=-I$ and
$$
A = \left. \frac{\partial f(x)}{\partial x} \right|_{x=\xh} = D \tilde{\Gamma} 
$$
with  $\tilde{\Gamma}$ as in~(\ref{e:gamma}) and
$$
D = diag\left( \frac{-a_i \beta_i}{(n_{0}+ \sum_{j} \tilde{\Gamma}_{ij} \xh_j)^2} \right).
$$
An appropriate gain $K$ can be selected to ensure stability at the origin for $A+BK$ if the feasibility conditions are satisfied,~\cite{isidoribook}, i.e., $(A,B)$ is stabilizable and   
$$
\left[ \begin{array}{cc} A & B \\ I & 0 \end{array} \right] = \left[ \begin{array}{cc} D\tilde{\Gamma} & -I \\ I & 0 \end{array} \right]
$$
is nonsingular, which is clearly the case. Thus, tracking is achievable, and pole placement methods can be applied to derive an exact gain parameter $K$. In addition, the system can be made more robust by applying integral control as an alternative. The same two feasibility conditions that apply to feedforward control, which our system $(A,B)$ already satisfies, also apply to integral control. The augmented linearized system $(A,B)$ with integral control has the form
$$
\left[ \begin{array}{c} \dot{\xtb} \\ \dot{\sigma} \end{array} \right] = \left[ \begin{array}{cc} D \tilde{\Gamma} & 0 \\ I & 0 \end{array} \right] \left[ \begin{array}{cc} \xtb \\ \sigma \end{array} \right] + \left[ \begin{array}{cc} -I \\ 0 \end{array} \right] \alpha 
$$
with control law $\alpha =  K \xtb + K_I \sigma$. The gains $K$ and $K_I$ can be determined using pole placement applied to the augmented system. 

\section{Game Design under Information Constraints} \label{sec:incomplete}


Unlike the case discussed in the previous section, the system designer usually does not have full information about the system parameters such as user preferences or utility functions. Under such \textbf{information constraints}, the designer either deploys additional dynamic feedback mechanisms or requires side information from the system, depending on the specific design objectives. An example for the former case is a dynamic pricing scheme operating as an ``outer feedback loop''. If the objective is to achieve a social optimum (e.g. maximization of sum of user utilities) or satisfying some QoS constraints, then the designer often needs \textit{accurate} side information from users or the system. Assuming that users are non-manipulative (honest or price-taking) and given accurate side information, the task of the designer can be formulated as an optimization problem even if it is solved indirectly or in a distributed manner. Next, an example formulation is provided that illustrates the underlying concepts. T\textit{he objective is let the NE coincide with a social optimum (maximizing sum of user utilities) in the general case of non-separable user utilities of the form $U_i(x)$.}

\subsection{Pricing Dynamics under Time-Scale Separation}

Define a strictly concave and smooth social welfare function $ \U(x)$ which is a sum of concave and \textbf{non-separable utility} functions $ \U(x):=\sum_i U_i(x)$ and admits a global maximum $ \xh=\arg \max_{x} \sum_i U_i(x)$. As before  the cost function is $ J_i(\a_i,x)= \a_i p_i (x)  - U_i(x)$. Here, unlike the separable one in~\cite{gamenetsne}, side information (e.g. $U_i(x^*)$) is required in order to bring the NE to the social maximum point. The social maximum is defined easily via the first order optimality conditions
$$ \dfrac{\partial \U}{\partial x}(\xh)=0,$$
where 
$$ \dfrac{\partial \U}{\partial x}(x) = \left [ \begin{array}{ccc}  \sum_j \dfrac{\partial U_j}{\partial x_1}(x)   & \dots &  \sum_j \dfrac{\partial U_j}{\partial x_N} (x)  \end{array} \right ]. $$
The social maximum is shown to coincide with the unique equilibrium (and NE) of the following pricing mechanism 
\begin{equation} \label{e:pricing2}
  \dot \a_i = \sum_j  \sum_k \dfrac{\partial U_j}{\partial x_k^*} \dfrac{\partial x_k^*}{\partial \a_i}  \;\; \forall i
\end{equation}
If these pricing dynamics are on a slower time scale than the game dynamics, then the system designer can obtain sufficiently accurate estimates of $\partial U_i(x^*) / \partial x_i$ and $\partial x_i^* / \partial \a_i$. 

As one possibility, if the users adopt a gradient algorithm, then the designer can use past values of $x^*$ and $\a$ along with the exact form of the pricing functions $p$  to estimate $\partial U_i(x^*) / \partial x_i$ directly without requiring any side information (except from some fixed system parameters). In addition, side information (e.g. $U_i(x^*)$) is required to estimate $\partial U_i(x^*) / \partial x_j$ and $\partial x_i^* / \partial \a_j$ for all $i, \, j$. 

Assume an ideal case where the parameter estimation is perfectly accurate. Then, the pricing mechanism above ensures that the NE point of the underlying game globally asymptotically converges to the maximum of the social welfare function.\footnote{For simplicity,  the social maximum point is implicitly assumed to be on the solution space of the game.} The next theorem follows  from Lyapunov theory and LaSalle's theorem in a straightforward manner. The Lyapunov function is chosen to be the negative of social welfare function, $V = -\U$.

Assume that the pricing mechanism is on a slower time scale than the actual game dynamics leading to a time-scale separation, and hence to a hierarchically structured problem. Assuming this \textbf{time-scale separation}, for simplicity, initially only the pricing dynamics (slower dynamics) is  considered.  

\begin{thm} \label{thm:sepconv}
Consider a class $\G2$ game with cost structure~(\ref{e:cost1}) and user utilities $U_i\;\forall i$, where
the objective function  $\U(x):=\sum_i U_i(x)$ admits a unique inner global maximum $ \xh=\arg \max_{x} \U(x)$.
Then, the pricing mechanism~(\ref{e:pricing2}) ensures that the NE point of the underlying game, $x^*$, globally asymptotically converges to the maximum of the social welfare function, $\xh$, if the Jacobian matrix of  the mapping $\T$ with respect to pricing vector $\a$, defined as
$$H(\a) := \dfrac{\partial x^*}{\partial \a}(\a) = \left [  \dfrac{\partial x^*_i}{\partial \a_j}(\a) \right ], \;\; i,j=1,\dots,N, $$ 
is non-singular.
\end{thm}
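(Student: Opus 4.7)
The plan is to exhibit $V(\a) = -\,\U(x^*(\a))$ as a Lyapunov function for the slow pricing dynamics~(\ref{e:pricing2}) and then invoke LaSalle's invariance principle. By the time-scale separation assumption, at each instant of the slow time the fast game dynamics have settled to the unique NE $x^*(\a)$, so $x^*$ is a well-defined smooth function of $\a$ (smoothness follows from the implicit function theorem applied to the NE first-order conditions, using Assumption~\ref{assm3} and the hypothesis that $H(\a)$ is non-singular). Since $\U$ is strictly concave with unique inner maximum $\xh$, the map $V$ attains its minimum exactly at those $\ah$ with $x^*(\ah)=\xh$, and such $\ah$ exist by Theorem~\ref{thm:pricing}.

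Next, I would differentiate $V$ along trajectories. Writing the gradient of $\U$ as the row vector $\nabla_x \U = \bigl[\sum_j \partial U_j/\partial x_1,\dots,\sum_j \partial U_j/\partial x_N\bigr]$, the chain rule gives
\begin{equation}
\dot V \;=\; -\,\nabla_x \U\,(x^*)\; H(\a)\;\dot\a.
\end{equation}
The key algebraic step is to observe that the right-hand side of the prescribed dynamics~(\ref{e:pricing2}) can be rewritten as
\begin{equation}
\dot\a_i \;=\; \sum_k \frac{\partial \U}{\partial x_k^*}\,\frac{\partial x_k^*}{\partial \a_i} \;=\; \bigl(H(\a)^T \nabla_x \U(x^*)^T\bigr)_i,
\end{equation}
since $\sum_j \partial U_j/\partial x_k^* = \partial \U/\partial x_k^*$. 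Substituting back yields
\begin{equation}
\dot V \;=\; -\,\nabla_x \U(x^*)\,H(\a)\,H(\a)^T\,\nabla_x \U(x^*)^T \;=\; -\bigl\Vert H(\a)^T \nabla_x \U(x^*)^T\bigr\Vert^2 \;\leq\; 0.
\end{equation}

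Finally, the non-singularity assumption on $H(\a)$ ensures that $H(\a)H(\a)^T$ is positive definite, so $\dot V = 0$ if and only if $\nabla_x \U(x^*)=0$, i.e. (by strict concavity and uniqueness of the maximum) $x^*=\xh$. The largest invariant set contained in $\{\dot V = 0\}$ thus reduces to the equilibrium set $\{\a : x^*(\a)=\xh\}$, and LaSalle's invariance principle delivers asymptotic convergence of $x^*(\a(t))$ to $\xh$. The main obstacle I anticipate is the \emph{global} part of the claim: strictly speaking LaSalle needs either radial unboundedness of $V$ or an a priori bounded trajectory argument; here the compactness of $\O$ (Assumption~\ref{assm1}) keeps $x^*$ bounded and hence $V$ bounded, so one must interpret ``global'' as convergence for all initial conditions in the pre-image under $\T$ of the interior of $\O$, and appeal to the fact that the non-singularity of $H$ makes $\a\mapsto x^*$ a local diffeomorphism so that convergence in $x^*$-space follows from the Lyapunov estimate above.
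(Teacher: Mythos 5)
Your proposal is correct and follows essentially the same route as the paper: the Lyapunov function $V(\a)=-\U(\T(\a))$, the vector-form rewriting $\dot\a = H^T(\a)\,[\partial\U/\partial x]^T$, the computation $\dot V = -\norm{\dot\a}^2 \leq 0$, the identification of $\{\dot V = 0\}$ with the unique equilibrium via non-singularity of $H$, and LaSalle's invariance principle. Your closing caveat about the boundedness of the $\a$-trajectory needed for LaSalle is a fair refinement that the paper's proof leaves implicit, but it does not change the argument.
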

\begin{proof}
The pricing dynamics (slow) are analyzed, assuming that the user dynamics is fast and converges quickly   to  $x^* = \T(\a)$ for any given $\a$. The pricing scheme~(\ref{e:pricing2}) admits a \textbf{unique equilibrium} \textit{if and only if} 
$\partial \U(x) / \partial x=0$.\footnote{We drop in the rest of the proof the $(\cdot)^*$ notation, which characterizes the NE, for convenience.} 
In order to prove this statement,  first change the summation order in~(\ref{e:pricing2}) and write it in vector form as
\begin{equation} \label{e:pricing1v}
  \dot \a = H^T(\a) \; \left [\dfrac{\partial \U}{\partial x}(x) \right ]^T
\end{equation}
where $x = \T(\a)$. The \textbf{sufficiency} immediately follows from non-singularity of $H$. If $\partial \U / \partial x=0$, then $\dot \alpha =0$ and at the same time the system is at the social maximum $\xh$. Hence, at the unique equilibrium of the pricing scheme, $\ah$, the social maximum is achieved.
Next, the \textbf{necessity} is shown under the same condition.  At the unique equilibrium point $\ah$ of the pricing scheme~(\ref{e:pricing1v}),  it is necessary that  $\partial \U / \partial x=0$ due to non-singularity of of $H$. Consequently, at the unique equilibrium point $\ah$ of the pricing scheme the objective function $\U(x)$ necessarily reaches its maximum $\xh$ characterizing the social maximum, hence $\xh = \T(\ah)$ or $\ah = \Th(\xh)$.


In order to establish convergence of~(\ref{e:pricing2}), define the Lyapunov function
$$ V(\a):= -\U =  - \sum_i  U_i( \T (\a)) $$
on the compact game domain $\O$ and $\a \in \Real^N$. Taking the derivative of $V$ with respect to time along the pricing dynamics~(\ref{e:pricing2}) yields
$$ 
\begin{array}{ll}
\dot V& =  - \sum_i \sum_j \dfrac{\partial U_i}{\partial x_j} \sum_k  \dfrac{\partial x_j}{\partial \a_k}\, \dot \a_k \\
 & = -\sum_k \left( \sum_i \sum_j  \dfrac{\partial U_i}{\partial x_j}\dfrac{\partial x_j}{\partial \a_k}  \right) \, \dot \a_k \\
 & = -\sum_k (\dot \a_k)^2 \leq 0.
\end{array}
$$
Thus $\dot V =0$ only at $\dot \a_j =0$, $\forall j$, or at its unique equilibrium. Hence, by the LaSalle's theorem, (Theorem 4.4,~\cite{khalilbook}), the pricing scheme~(\ref{e:pricing2}) globally asymptotically converges to its unique equilibrium at which the NE solution coincides with the social maximum.\end{proof}

In the special case of \textit{separable} utility functions $ \U(x):=\sum_i U_i(x_i)$, the pricing mechanism simplifies to
\begin{equation} \label{e:pricing1}
  \dot \a_i = \sum_j \dfrac{\partial U_j}{\partial x_j^*} \dfrac{\partial x_j^*}{\partial \a_i}  \;\; \forall i,
\end{equation}
which ensures that the NE point of the underlying game globally asymptotically converges to the maximum of the social welfare function
under the same condition. 

\subsection*{Example 3}

Consider a game with separable utility functions with the cost 
$$ \begin{array}{l}
 J_i(\a_i,x)= \a_i \left( \sum_i x_i \right) - U_i(x_i),\\ \\
\text{where } U_i:=\beta_i \log(1+x_i) - k_i x_i). 
\end{array} $$
This type of utility function may arise due to inherent physical constraints on player actions such as battery constraints on uplink transmission power levels in wireless devices. Then, the NE solutions is 
$$x_i^* = \dfrac{\beta_i}{\alpha_i+k_i} -1 .$$
Notice that, the matrix $H(\a)$ is diagonal in this case. Furthermore, we can explicitly find 
$$ \dfrac{\partial x^*_i}{\partial \a_i} = -\dfrac{\beta_i}{(\alpha_i+k_i)^2} <0 ,$$
from which non-singularity of $H$ immediately follows. 
The properties of this example also hold for a quadratic pricing function replacing the linear one, i.e., $p_i(x) = \sum_i x^2_i$. However, for $p_i(x) =  e^{\sum_i x_i}$, $x_i^*$ is not independent of $\a_j$ and non-singularity of $H(\a)$ is not immediate.

\subsection*{Example 4}

Consider a variant of Example 1 for the non-separable utility case by defining a  linear pricing and a signal-to-interference ratio (SIR)-like utility function
$$  J_i(\a_i,x)= \a_i \left( \sum_i x_i \right)  - \beta_i \log(1+s_i(x)),$$ 
where \vspace{-0.3cm}
$$s_i(x):=\dfrac{h_i x_i}{\sum_{j \neq i} h_j x_j + \sigma}. $$
Here, the parameter $L=1$ for simplicity.
Then, it follows that under non-singularity  conditions on the system matrix $A$ defined in~(\ref{e:matrix1}), the unique NE is given as $x^* = A^{-1} \; v$ where $v = [ v_i ]$, $v_i = \frac{\beta_i}{\alpha_i} - \sigma$. In this case  $x^*_i$ depends on all pricing parameters, $\a$, and  $H(\a)$ is not diagonal. However, we can still explicitly find $\partial x^* / \partial \a$, and it can be shown that, under non-singularity conditions on $A^T$, $H$ is  non-singular.

\subsection*{Example 5}

An example of dynamic game control is obtained by extending Example 1 to the limited information case where the designer does not have access to user preferences. Then, a dynamic a pricing mechanism can be deployed. Toward this end, define a set of penalty functions $\rho_i(x_i)$ to bring the system within the desired region
\begin{equation} \label{e:barrier1}
  \rho_i (x_i) := 
\begin{cases}
f (\bv_i - (S x)_i ), \text{ if } s_i < \sb_i \\
 0, \text{ else}
\end{cases},
\end{equation}
where the scalar function $f(\cdot)$ is smooth and nondecreasing in its argument, and 
$$f(0)=\dot f(0)=0. $$ 
For example, $f$ could be a quadratic function.

A possible pricing function is then
\begin{equation} \label{e:pricing0}
  \dot \a_i = \sum_j \dfrac{\partial \rho_j}{\partial x_j^*} \dfrac{\partial x_j^*}{\partial \a_i}  \;\; \forall i.
\end{equation}
It is assumed here that the designer (base station) has access to system parameters $L,\, h$, and $\sigma$. The other
terms can be estimated through iterative observations~\cite{kayestimation}.

Finally,  this pricing mechanism ensures that the NE point of the underlying game, $x^*$, enters the desired QoS region $\Oh$.
To show this, define the Lyapunov function 
$$ V:= - \sum_i  \rho_i( x_i ) $$
on the compact game domain $\O$. Taking the derivative of $V$ with respect to time along the pricing dynamics~(\ref{e:pricing0}) yields
$$ \begin{array}{ll}
\dot V& = - \sum_i \dfrac{\partial \rho_i}{\partial x_i} \sum_j  \dfrac{\partial x_i}{\partial \a_j}\, \dot \a_j \\
 & = -\sum_j \left( \sum_i \dfrac{\partial \rho_i}{\partial x_i}   \dfrac{\partial x_i}{\partial \a_j} \right) \, \dot \a_j \\
 & = - \sum_j (\dot \a_j)^2 \leq 0,
\end{array}$$
with $\dot V < 0$ outside the set $\Oh$ and $\dot V=0$ if and only if $\dot \a_i=0 \; \forall i$. Hence, the system converges to the
desired region $\Oh$ under the pricing mechanism.

\subsection{Game and Pricing Dynamics on Two Time-Scales}

The previous analysis has focused on pricing dynamics under the time-scale separation, where game dynamics are assumed to be sufficiently fast. Removing this assumption for a complete treatment, two loops (on two time-scales) will be considered: one outer loop for pricing ($\dot{\a_i}$) and an inner loop for actions ($\dot{x_i}$). The next result presents the full analysis taking both  user and pricing dynamics into account and is based on a \textbf{singular perturbation} or boundary layer approach,~\cite{khalilbook}. Towards this end, Theorem ~\ref{thm:sepconv} is extended to analyze both  pricing (slow) dynamics and user (fast) dynamics by using  a combined Lyapunov function.


\begin{thm} \label{thm:sepconv_twoscales}
Define an objective function  $\U(x):=\sum_i U_i(x)$ which admits a unique inner global maximum $ \xh=\arg \max_{x} \U(x)$ under suitable assumptions for user utilities $U_i\;\forall i$ in a class $\G2$ game. Then, under the pricing mechanism~(\ref{e:pricing2})  the user dynamics~(\ref{e:system1}) globally asymptotically converges to the maximum of the social welfare function, $\xh$, if the two systems are
on separate time-scales, the Jacobian matrix of  the mapping $\T$ with respect to  pricing vector $\a$,
$$H(\a) = \dfrac{\partial x^*}{\partial \a}(\a) = \left [  \dfrac{\partial x^*_i}{\partial \a_j}(\a) \right ], \;\; i,j=1,\dots,N, $$ 
is non-singular and the Jacobian matrix $\Theta$ of  $\dfrac{\partial J_i}{\partial x_i}$,~(\ref{e:system1}), with respect to $x$
$$\Theta(\a,x) = \left [  \dfrac{\partial^2 J_i(\a,x)}{\partial x_j \partial x_i} \right ], \;\; i,j=1,\dots,N, $$ 
 is positive definite. 
\end{thm}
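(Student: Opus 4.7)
The plan is to treat the coupled system as a singularly perturbed one in which the user dynamics (\ref{e:system1}) form the fast (``boundary layer'') subsystem and the pricing dynamics (\ref{e:pricing2}) form the slow (``reduced'') subsystem, then combine separate Lyapunov functions for each into a composite Lyapunov function in the spirit of Khalil's composite-stability theorem.

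First I would write the system with an explicit time-scale parameter $\eps$ as
\begin{equation*}
\dot x = -\overline{\nabla} J(\a,x), \qquad \eps\,\dot\a = H^T(\a)\left[\tfrac{\partial \U}{\partial x}(x)\right]^T,
\end{equation*}
with $0 < \eps \ll 1$. Setting $\eps = 0$ constrains $x$ to the slow manifold $x = \T(\a)$, and along that manifold the slow flow coincides with the pricing mechanism analyzed in Theorem \ref{thm:sepconv}. That theorem, under non-singularity of $H(\a)$, supplies the reduced-order Lyapunov function $V_1(\a) := -\U(\T(\a))$ with $\dot V_1 = -\|\dot\a\|^2$ along the slow flow, giving global asymptotic stability to $\ah = \Th(\xh)$.

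For the boundary layer I would freeze $\a$ and study $\dot x = -\overline{\nabla} J(\a,x)$ about the equilibrium $x^*(\a)$. The natural candidate is $V_2(\a,x) := \tfrac{1}{2}\|x - x^*(\a)\|^2$. Because $\overline{\nabla} J(\a,x^*(\a)) = 0$, a mean-value argument gives
\begin{equation*}
\dot V_2 = -\bigl(x - x^*(\a)\bigr)^T\Theta(\a,\xi)\bigl(x - x^*(\a)\bigr)
\end{equation*}
for some $\xi$ on the segment from $x^*(\a)$ to $x$. Positive definiteness of $\Theta$, hypothesized in the theorem, then yields $\dot V_2 \le -c_2\|x - x^*(\a)\|^2$ and hence exponential stability of the fast subsystem to $x^*(\a)$ uniformly in $\a$.

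The main work is assembling the composite Lyapunov function $W(\a,x) := V_1(\a) + d\,V_2(\a,x)$ and controlling the cross-coupling term that appears because the slow equation is evaluated at the actual $x$ rather than at $\T(\a)$:
\begin{equation*}
\dot W = \dot V_1\big|_{x=\T(\a)} + \bigl[\dot V_1 - \dot V_1|_{x=\T(\a)}\bigr] + d\,\dot V_2.
\end{equation*}
By smoothness of $\U$ and $H$, the bracketed cross term is bounded in magnitude by $c_3\|x - \T(\a)\|\cdot\|\dot\a\|$, which Young's inequality converts into a quadratic form in $(\|\dot\a\|,\|x-\T(\a)\|)$. The hard part, and the crux of the argument, is tuning $d$ and $\eps$ so that this quadratic form is negative definite; the tuning uses non-singularity of $H$ (to keep the slow Lyapunov decrease nontrivial and bound the sensitivity $\partial \T/\partial \a$) and positive definiteness of $\Theta$ (to supply the contraction rate $c_2$ needed to dominate $c_3$). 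Once $\dot W \le 0$ with equality only at $(\xh,\ah)$, LaSalle's invariance principle~\cite{khalilbook} gives global asymptotic convergence of $(x,\a)$ to $(\xh,\ah)$, completing the argument.
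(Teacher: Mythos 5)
Your overall strategy is the same as the paper's: cast the coupled dynamics as a singularly perturbed system with the user dynamics as the boundary layer and the pricing dynamics as the reduced system, reuse the Lyapunov function $-\U(\T(\a))$ from Theorem~\ref{thm:sepconv} for the slow flow, build a separate Lyapunov function for the fast flow whose decrease comes from $\Theta>0$, and combine them into a convex composite $W=(1-d)V+d\,W_2$ whose derivative is made negative for small enough $\eps$ by bounding the interconnection terms (the paper invokes Theorem 11.3 of~\cite{khalilbook} for exactly this step). Two points of divergence. First, a genuine inconsistency in your write-up: you place $\eps$ on the pricing equation, $\eps\,\dot\a = H^T(\a)[\partial\U/\partial x]^T$, which in standard singular perturbation form makes $\a$ the \emph{fast} variable; setting $\eps=0$ there yields $\partial\U/\partial x = 0$, not the slow manifold $x=\T(\a)$ you claim in the next sentence. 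The paper writes $\epsilon\,\dot{\xtb} = f(\xtb+\xh)+g(\xtb+\xh)(\at+\ah)$, i.e.\ the small parameter multiplies the \emph{user} dynamics, consistent with the decomposition you state in words and then actually use; you should swap the placement of $\eps$. Second, your boundary-layer Lyapunov function $V_2=\tfrac12\|x-x^*(\a)\|^2$ differs from the paper's $W_2=\tfrac12\sum_i(\partial J_i/\partial x_i)^2$, whose $t_f$-derivative is $-\phi^T\Theta\,\phi$ with $\Theta$ evaluated at the current point. Your choice is equally valid but needs the integral mean-value representation $\overline\nabla J(\a,x)-\overline\nabla J(\a,x^*)=\bigl[\int_0^1\Theta(\a,x^*+s(x-x^*))\,ds\bigr](x-x^*)$ (a single intermediate point $\xi$ does not exist for vector-valued maps) together with uniform positive definiteness of the symmetric part of $\Theta$ on the relevant compact set; with that repair both routes deliver the same contraction estimate, and the rest of your composite argument goes through as in the paper.
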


\begin{proof} Consider the pricing dynamics~(\ref{e:pricing2}) or~(\ref{e:pricing1v})   on the slow time-scale $t$, and the user dynamics~(\ref{e:system1}) or~(\ref{e:system2}) on the fast time-scale, $t_f$. Here  $t = \epsilon \, t_f $, with  $\epsilon>0$ a small scaling parameter.  The equilibrium of the full system~(\ref{e:system1}),~(\ref{e:pricing1v}) is described by $\xh = \T (\ah)$ and $\ah = \Th (\xh)$. In \textbf{singular perturbation} form  the full system is written in the shifted coordinates $(\at,\xtb)$ as
\begin{eqnarray} \label{e:sp1}
\dot{\at} &=& H^T(\at+\ah) \left[ \left. \frac{\partial \U (x)}{\partial x}  \right]^T  \right|_{x=\xtb+\xh} \nonumber \\
\epsilon \dot{\xtb} &=&  f(\xtb + \xh) + g(\xtb + \xh) \, (\at + \ah)
\end{eqnarray}
where $\xtb = x - \xh$, $\at = \a - \ah$,  $\dot{\at} = d \at /d t$ and the origin $(\at, \xtb) = (0,0)$ is its equilibrium. 

In the shifted coordinates the \textbf{slow manifold} is defined  as $\xtb = h(\at)$ by taking $\epsilon=0$ and solving on the right-hand side  of~(\ref{e:sp1}), i.e.,
\begin{equation} \nonumber
0 =  f(\xtb + \xh) + g(\xtb + \xh) \, (\a + \ah)
\end{equation}
Since $0 =  f(x) + g(x) \, \a$ for  $ x = \T (\a)$, the foregoing yields $\xtb  + \xh= \T(\at + \ah) $,  so that $\xtb = h(\at)$ for $ h(\at)$ defined by
\begin{equation} \label{e:h_def}
h(\at)  := \T (\at + \ah) - \xh = \T (\at + \ah) - \T(\ah)
\end{equation}
with $h(0) = 0$.

Once the slow manifold is determined, a solution that starts in the manifold is known to stay inside for all times.  In order to analyze the behavior in the neighborhood of this manifold, an additional change of variables is used $y = \xtb - h(\at)$ on the  \textbf{fast dynamics} $\xtb$ with respect to the slow manifold. Using this change of variables the full system  (\ref{e:sp1}) is written in the $(\at, y)$ coordinates as
\begin{eqnarray} \label{e:sp2}
\dot{\at} &=& H^T(\at+\ah) \left[ \left. \frac{\partial \U (x)}{\partial x} \right]^T\right|_{x=y+\T(\at+\ah)}  \\
\epsilon \dot{y} &=&  f(y+\T(\at+\ah)) + g(y+\T(\at+\ah)) \, (\at + \ah) \nonumber\\
& & - \epsilon \frac{\partial h(\at) }{\partial \at} H^T(\at+\ah) \left[ \left. \frac{\partial \U (x)}{\partial x} \right]^T \right|_{x=y+\T(\at+\ah)} \nonumber
\end{eqnarray}
where $y = x - \xh - h(\at)$ and~(\ref{e:h_def}) have been used to obtain $x=y+\T(\at+\ah)$. 

On the slow manifold (for  $y=0$), the dynamics of $\at$ in~(\ref{e:sp2}) define the  reduced system, i.e.,
\begin{equation} \label{e:redsys}
\dot{\at} = H^T(\at+\ah) \left[ \left. \frac{\partial \U (x)}{\partial x} \right]^T \right|_{x=\T(\at+\ah)} 
\end{equation}
On the fast $t_f$ time-scale, the dynamics of $y$ in~(\ref{e:sp2}) define the  boundary-layer system
\begin{equation} \label{e:BLsys}
\frac{d y}{d t_f} =  f(y+\T(\at+\ah)) + g(y+\T(\at+\ah)) \, (\at + \ah) 
\end{equation}
where $ \frac{d y}{d t_f} = \epsilon \dot{ y}$ and $\epsilon =0$ has been used on the right-hand side of~(\ref{e:sp2}).  Using~(\ref{e:system1}), yields component-wise from~(\ref{e:BLsys}) 
\begin{equation} \label{e:BLsys1}
\frac{d y_i}{d t_f} = - \left. \frac{\partial J_i (\at,x)}{\partial x_i} \right|_{x = y+\T(\at+\ah) }
\end{equation} 

To analyze the behavior and stability properties of the full system~(\ref{e:sp2}), an overall \textbf{composite Lyapunov function} is defined based on  Lyapunov functions for the reduced~(\ref{e:redsys}) and boundary-layer systems,~(\ref{e:BLsys1}), respectively.   For the reduced system~(\ref{e:redsys}) the same  Lyapunov function  is used, i.e.,
\begin{equation} \label{e:W1}
V(\at) = - \U (\T(\at + \ah)).
\end{equation}
As shown in the proof of Theorem~\ref{thm:sepconv}, the derivative of $V$ with respect to time along the reduced system dynamics~(\ref{e:redsys}) is negative, 
\begin{equation} \label{e:W1dot}
\dot{V}(\at) \leq -\| \dot{\at} \|^2 \leq 0
\end{equation} 
For the boundary-layer system~(\ref{e:BLsys1}) the following candidate Lyapunov function is defined 
\begin{equation} \label{e:W2} 
W_2 (\at, y) = \frac{1}{2} \sum_i \phi_i^2(\at,y+\T(\at+\ah))
\end{equation}
where
$$
\phi_i(\at,y+\T(\at+\ah)) := \left. \frac{\partial J_i (\at,x)}{\partial x_i} \right|_{x = y+\T(\at+\ah) } 
$$
Note that $\phi_i =  \frac{d y_i}{d t_f}$ and, by using~(\ref{e:BLsys1}),
\begin{eqnarray} \label{e:phidot}
\frac{d \phi_i}{d t_f}  =  \frac{d^2 y_i}{d t^2_f}= - \sum_j \left \{ \frac{\partial}{\partial y_j} \left(   \frac{\partial J_i(\at,x)}{\partial x_i}   \right) \right \} \phi_j
\end{eqnarray}
The derivative of $W_2$ with respect to time along the boundary-layer system trajectory (\ref{e:BLsys}) is given as
$$
\frac{d W_2}{d t_f} (\at,y) = \sum_i \phi_i (\at,y+\T(\at+\ah)) \, \frac{d \phi_i}{d t_f}. 
$$
or, by using~(\ref{e:phidot}),
$$
\frac{d W_2}{d t_f} (\at,y) = - \sum_i  \sum_j \phi_i \Theta_{i,j} \phi_j
$$
where, 
$$
\Theta_{i,j} :=  \frac{\partial^2 J_i (\at,x)}{\partial x_j \partial x_i}
$$
evaluated at $x = y+\T(\at+\ah)$. In vector form this yields
\begin{equation} \label{e:W2dot}
\frac{d W_2}{d t_f} (\at,y)  = - \phi^T \, \Theta \,  \phi <0
\end{equation}
since $\Theta >0$ by assumption. Moreover, for any $\at$ in an appropriate domain  around $0$ an upper bound  that depends only on $y$
can be found for (\ref{e:W2dot}). 

Finally, using the Lyapunov functions for the reduced~(\ref{e:W1}) and boundary-layer systems~(\ref{e:W2}),  the following candidate Lyapunov function is defined 
for the full singularly perturbed system (\ref{e:sp2})
$$W (\at,y)=(1-d)\,V(\at)+d\,W_2 (\at,y)$$
for some  $0<d<1$.  Then based on~(\ref{e:W1dot},~\ref{e:W2dot}) using standard arguments as in the proof of Theorem 11.3 in ~\cite{khalilbook},  it can be shown that there exists an $\epsilon^* >0$ such that for every $0< \epsilon < \epsilon^*$, $W$ is a Lyapunov function for the full  system (\ref{e:sp2})  
and the system is asymptotically stable around the origin.
\end{proof}
\begin{rem}
The bound and required interconnection relations follow immediately if, the stronger condition is imposed that the reduced and boundary-layer systems are exponentially stable as in Theorem 11.4 in~\cite{khalilbook}.
\end{rem}


\subsection*{Example 6}


Consider the Example 2 with cost $J_i$ a special case of~(\ref{e:cost1}), for linear pricing  and OSNR-like utility 
\begin{eqnarray} \label{exIV_C} 
U_i(x) &=& \beta_i (\log \left ( 1+ a_i \frac{\gamma_i(x)} { 1 - \Gamma_{ii} \gamma_i(x) } \right ) - x_i )
\end{eqnarray}
Such an utility that may arise due to specific constraints on individual channel power ensures the existence of a unique maximum for $\U(x) = \sum_i U_i(x)$. 
The Nash equilibrium $x^*$ can be obtained from $\partial J_i/\partial x_i = 0$. 
 If $a_i $ are selected  such that the following diagonal dominance condition holds 
\begin{equation}
a_i > \sum_{j \neq i} \Gamma_{ij} \label{e:NashOSNR}
\end{equation}
then $x^*$ is explicitly expressed as 
$$
x^* = \tilde{\Gamma}^{-1} \, C(\a)
$$
where 
$$\tilde{\Gamma}_{ij} = \left\{ \begin{array}{c} a_i, \hspace{0.6em} \quad i=j \\ \Gamma_{ij}, \quad i \neq j
\end{array} \right. , $$
and $C(\a)$ is a vector with elements $\frac{a_i \beta_i}{\a_i+\beta_i} - n_{0} $. Hence,
\begin{equation} \label{e:exH}
H(\a) = \frac{\partial x^*}{\partial \a} (\a) = \tilde{\Gamma}^{-1} \rm{diag} \left( \frac{-a_i \beta_i}{ (\a_i+\beta_i)^2} \right)
\end{equation}
 is clearly non-singular. The second condition in Theorem \ref{thm:sepconv_twoscales} is that  $\Theta > 0$, where $\Theta$ is for this example:
\begin{eqnarray}
\Theta_{ii} &=& \frac{\partial^2 J_i}{\partial x_i^2} = \frac{a_i^2 \beta_i}{( n_{0} +\sum_{j} \tilde{\Gamma}_{ij} x_j )^2} \nonumber \\
\Theta_{ij} & =& \frac{\partial^2 J_i}{\partial x_j \partial x_i} =
 \frac{a_i \beta_i \Gamma_{ij}}{( n_{0} +\sum_{j} \tilde{\Gamma}_{ij} x_j )^2} \nonumber . 
\end{eqnarray}
It can be immediately seen that if~(\ref{e:NashOSNR}) holds, 
then the matrix $\Theta $ is strictly diagonal dominant. If in addition, $a_i > \sum_{j \neq i} \Gamma_{ji} $ it can be shown that $\Theta^T$ is strictly diagonal dominant, hence $\Theta$ is positive definite. The closed loop system~(\ref{e:pricing2}),~(\ref{e:system1}) is \begin{equation} \label{e:expricingvect}
  \dot \a = H^T(\a) \; \left [\dfrac{\partial \U}{\partial x}(x) \right ]^T
\end{equation}
\begin{equation}
\dot{x} = \bar{f}(x) - \a \label{e:exuservect} \hspace{1.5cm}
\end{equation}
where $H^T(\a)$ is defined in~(\ref{e:exH}), 
and based on~(\ref{exIV_C}),  
the $j^{th}$ element in  $\dfrac{\partial \U}{\partial x}(x)$, for $ \U(x):=\sum_i U_i(x)$, is given as 
\begin{equation}
\dfrac{\partial \U}{\partial x_j}(x) = \bar{f}_j(x) - X_j(x)
\end{equation}
with
\begin{eqnarray}
\bar{f}_j(x) &=& \frac{a_j \beta_j}{n_{0} +  \sum_{k } \tilde{\Gamma}_{jk} x_k } - \beta_j \nonumber \\
X_j(x) &=& \sum_{p \neq j} \frac{ a_p \beta_p \Gamma_{pj} x_p}{ (  n_{0} + \sum_{k \neq p} \Gamma_{pk} x_k )  (n_{0} +(\sum_{k} \tilde{\Gamma}_{pk} x_k )}  \nonumber
\end{eqnarray}
Thus~(\ref{e:expricingvect}),~(\ref{e:exuservect}) represent the closed-loop dynamic system. The objective function $\U$ satisfies $\partial^2 \U/\partial^2 x < 0$ for a nonempty set of the design variables $a$, $\beta$, but the conditions are very complicated and as such  the details are omitted. The following section presents an example based on realistic parameters for the two channel case. 

\section{Simulations} \label{sec:simulation}

The dynamic system described by~(\ref{e:expricingvect}) and~(\ref{e:exuservect}) of Example 6  in the previous section
is simulated numerically in the two players case. Consider an optical fiber link with ten amplifiers, each with a parabolic gain shape according to the formula $G = -4e16 \times (\lambda-1555\times10^{-9})^2  + 15$ dB, where $\lambda$ denotes a channel wavelength,  and a span loss of 10 dB. The system matrix $\Gamma$  is obtained as\begin{equation}
\Gamma = \left[ \begin{array}{cc} 2.47\times 10^{-3} & 2.61\times10^{-3} \\ 2.36\times10^{-3} & 2.5 \times 10^{-3} \end{array} \right] \nonumber
\end{equation}
The following parameters are used: $\beta_i = 1$,  $a=[0.485, \; 0.48]$ such that they satisfy the diagonal dominance condition on $\Gamma$ and $\partial^2 \U/\partial^2 x < 0$. These parameters yield the equilibrium solution $\xh = [0.0134, \; 0.0128]$ milliwatt (mW) and $\ah = [73.4,\; 76.9]$.  An input noise power of $n_{0} =0.43$ nanowatt (nW) is considered.  The closed-loop system~(\ref{e:expricingvect}),~(\ref{e:exuservect}) is simulated for $\epsilon =0.01$, i.e., on two time scales such that  in discrete-time form the pricing algorithm~(\ref{e:expricingvect}) is run every 100 iterations of the user algorithm~(\ref{e:exuservect}).
The user algorithm is implemented in a decentralized way such that only the explicitly measurable OSNR  $\gamma_i$ signal is fed back to the respective channel source. 

The simulations, which take $\a=[18.35 \quad 19.23]$ 
and $x = [0.00043 \hspace{0.5em} 0.00043]^T$ as initial conditions and run for $50$ ($5000$) iterations of the pricing (users) algorithm, show a clear convergence to the equilibrium solution $\xh$, $\ah$ and to the OSNR values of approximately $23$dB  in Figures \ref{fig:x} and\ref{fig:OSNR}). The Figure~\ref{fig:price} depicts the evolution of channel prices, $\a$, as a function of time.
\begin{figure}
  \centering
\includegraphics[width=\columnwidth]{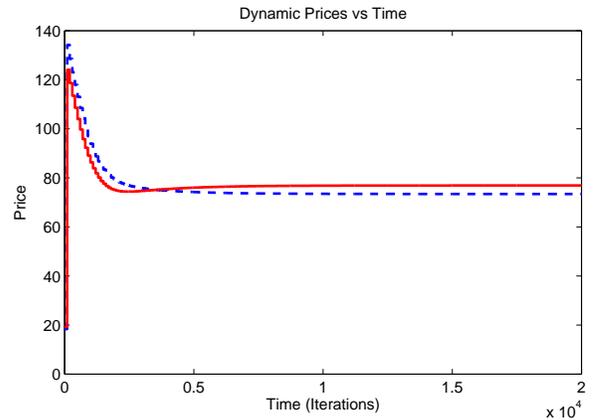}\\
  \caption{Channel prices, $\a$, as a function of time.}\label{fig:price}
\end{figure}

\begin{figure}
  \centering
  \includegraphics[width=\columnwidth]{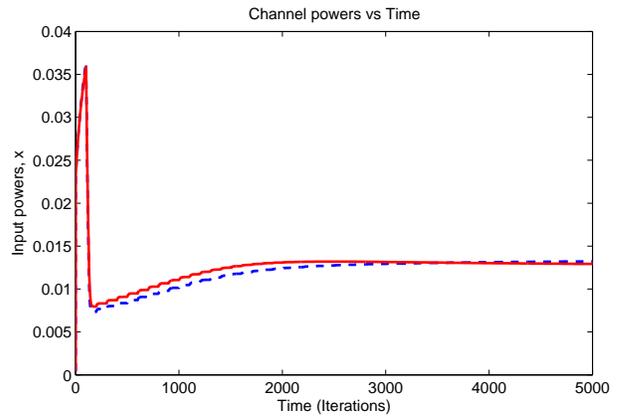}\\
  \caption{Channel powers $x$ ($mW$) as a function of time.}\label{fig:x}
\end{figure}
%
%
\begin{figure}
  \centering
  \includegraphics[width=\columnwidth]{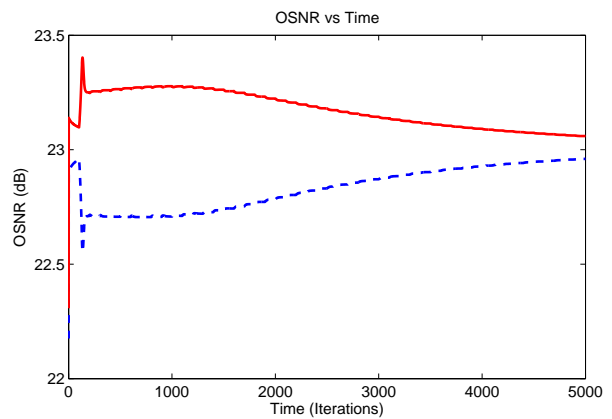}\\
  \caption{OSNR values (dB) as a function of time.}\label{fig:OSNR}
\end{figure}


\section{Conclusion} \label{sec:conclusion}


The  Nash equilibrium design of noncooperative games is discussed from an optimization and control theoretic perspective. 
It is shown that for a fairly general class of noncooperative games,
pricing mechanisms can be used as a design tool to optimize and control the outcome of the game such that certain predefined global objectives are satisfied. The class of games studied are applicable to a number of problems, such as network congestion control, wireless uplink power control, and optical power control. The game design problem is analyzed under full and limited information assumptions, perfect and imperfect time-scale separation between game and pricing dynamics, and both for separable and non-separable utility functions. Sufficient conditions are derived, which allow the designer to place the Nash equilibrium solution or to guide the system trajectory  to a desired region or point. The results obtained are illustrated through a number of examples from wireless and optical
networks.

There are many directions for extending the results presented. One of them is further application of game design methods to specific problems such as power control in optical networks. Others include secure communication and spectrum allocation in wireless networks. Yet another interesting direction is the analysis of estimation error effects on mechanism stability for the limited information case.  In addition, the stability of the combined pricing and game system can be analyzed under time delays, which often have a nonnegligible effect.

\section*{Acknowledgments}

The authors thank Holger Boche for helpful discussions and pointing out the Legendre transform interpretation. This work is in part
supported by Deutsche Telekom Laboratories. Earlier versions of this work have appeared partly in the International Conference on Game Theory
for Networks (GameNets), in Istanbul, Turkey in May 2009~\cite{gamenetsne} and in the Conference on Decision and Control (CDC), Shanghai, China, in December 2009~\cite{cdc09lacra}.

\appendix \label{sec:ne}

This appendix revisits the analysis in \cite{tansuphd,rosen} on existence
and uniqueness of Nash equilibrium under Assumptions~\ref{assm1}-\ref{assm4}.

For games of class $\G1$ and $\G2$, the existence of NE 
follows immediately from a standard theorem of game theory 
(Theorem 4.4, p.176, in~\cite{basargame}) under Assumptions~\ref{assm1} and \ref{assm2}.

In view of Assumption~\ref{assm4}, the Lagrangian function for player $i$ 
in this game is given by
\begin{equation} \label{e:lagrangian}
L_i(x,\mu)=J_i(x)+\sum_{j=1}^r \mu_{i, j} h_j(x) ,
\end{equation} 
where $\mu_{i, j}\geq 0,\;  j=1,\,2,\,\ldots r$ are the Lagrange multipliers of
player $i$~\cite[p. 278]{bertsekas2}.  
We now provide a proposition for $\G2$ games  with conditions similar 
to the well known Karush-Kuhn-Tucker necessary conditions 
(Proposition 3.3.1, p. 310,~\cite{bertsekas2}).

\begin{prop} \label{kuhntucker}
Let $x^*$ be a NE point of a $\G2$ game and Assumptions~\ref{assm1}-\ref{assm4}
hold. There exists then a unique set of Lagrange multipliers, 
$\{\phi_{i, j}:\;  j=1,\,2,\,\ldots r ,\; i=1,\,2,\,\ldots N \} $, such that
$$ \begin{array}{r}
 \dfrac{d L(x^*,\phi)}{d x_i}= \dfrac{d J_i(x^*)}{d x_i}+
 \displaystyle { \sum_{j=1}^r \phi_{i, j}^*\dfrac{d h_j(x^*)}{d x_i}} =0,\\ 
 i=1,\,2,\,\ldots N, \\
\phi_{i, j} \geq 0, \;\; \forall i, j,  \text{ and } \;
\phi_{i, j} = 0, \;\; \forall j \notin A_i(\x^*), \forall i\, ,
\end{array}
$$
where $A_i(x^*)$ is the set of active constraints in $i^{th}$ player's minimization
problem at NE point $x^*$.
\end{prop}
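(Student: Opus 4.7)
The plan is to view a Nash equilibrium $x^*$ as the simultaneous solution of $N$ separate convex optimization problems: by the Nash property, each $x^*_i$ minimizes the scalar-variable convex program
\[
\min_{x_i} \; J_i(x_i, x^*_{-i}) \quad \text{s.t.} \quad h_j(x_i, x^*_{-i}) \leq 0, \; j=1,\ldots,r,
\]
which is convex in $x_i$ by Assumption~\ref{assm2} (strict convexity of $J_i$ in $x_i$) and Assumption~\ref{assm4} (convexity of each $h_j$). The whole proposition is then obtained by applying the classical KKT necessary conditions separately to each of these $N$ convex programs and collecting the resulting multipliers.

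First I would verify a constraint qualification for each player's subproblem. Assumption~\ref{assm1} gives $\O$ a nonempty interior, which combined with convexity of $h_j$ provides Slater's condition for the joint feasible set; restricting to the scalar slice $x_{-i}=x^*_{-i}$, the non-vanishing derivative clause in Assumption~\ref{assm4} (that $dh_j/dx_i \neq 0$ for at least one $j$ at every $x \in \O$) ensures that the active constraints in player $i$'s problem satisfy a first-order regularity condition strong enough to invoke KKT. Second, I would apply Proposition~3.3.1 of \cite{bertsekas2} to each player's problem at $x^*$. This directly yields nonnegative multipliers $\phi_{i,j}\geq 0$ with the complementary slackness statement $\phi_{i,j}=0$ for $j \notin A_i(x^*)$, and with the stationarity equation
\[
\frac{dJ_i(x^*)}{dx_i} + \sum_{j=1}^{r} \phi_{i,j}\,\frac{dh_j(x^*)}{dx_i} = 0.
\]

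Third, I would address uniqueness of the multipliers. Because player $i$'s problem involves a single scalar decision variable, the stationarity condition is a single linear equation in the reduced vector $(\phi_{i,j})_{j \in A_i(x^*)}$, and complementary slackness already pins down $\phi_{i,j}=0$ off the active set. The non-vanishing derivative hypothesis in Assumption~\ref{assm4}, together with the fact that distinct active inequality constraints with nonzero $x_i$-derivatives locally cut the scalar feasible interval from the same side, forces the solution of this linear equation to be unique and nonnegative.

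The main obstacle is precisely this uniqueness step: in a vector-valued KKT problem uniqueness typically requires LICQ, and the content of the proposition is that the scalar structure of each player's subproblem, combined with the regularity embedded in Assumption~\ref{assm4}, is enough to substitute for it. The other delicate point is being explicit that the KKT theorem may be applied locally to each player's subproblem, which is justified because the relevant constraints and costs inherit convexity and smoothness from the joint game data via Assumptions~\ref{assm1}--\ref{assm4}.
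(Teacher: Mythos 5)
Your proposal takes essentially the same route as the paper: decompose the Nash equilibrium into $N$ per-player scalar convex programs and apply the Karush--Kuhn--Tucker necessary conditions (Proposition~3.3.1 of \cite{bertsekas2}, via the penalty approach) to each one at $x^*$, which is exactly what the paper's one-paragraph proof does. The only place you go beyond the paper is the uniqueness of the multipliers, and there your argument does not quite close: with a single scalar variable $x_i$, stationarity is one linear equation in the active multipliers, so if two or more constraints are simultaneously active with nonzero $x_i$-derivatives the nonnegative solution set is a segment, not a point; Assumption~\ref{assm4} only guarantees that \emph{at least one} active derivative is nonzero, which is the scalar analogue of a constraint qualification for existence but not of LICQ for uniqueness. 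The paper's own proof is equally silent on this degenerate case, so your attempt is faithful to (and slightly more explicit than) the published argument, but neither fully justifies the word ``unique'' as stated.
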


\begin{proof}
The proof essentially follows lines similar to the ones of the
Proposition 3.3.1 of~\cite{bertsekas2}, where the penalty approach
is used to approximate the original constrained problem by
an unconstrained problem that involves a violation of the
constraints. The main difference here is the
repetition of this process for each individual $x_i$ at the NE point $x^*$.
\end{proof}

Define now a more compact notation the vector of Lagrangian functions as 
$L:=[L_1,\ldots,L_N]$, and
the $N \times N$ diagonal matrix of Lagrange multipliers for the $j^{th}$ 
constraint as $\Phi_j=\rm{diag} [\phi_{1,j}, \phi_{2,j}, \ldots \phi_{N,j}]$. 


By Proposition~\ref{kuhntucker} and Assumption~\ref{assm4}, a NE point $x^{(1)}$ satisfies
\begin{equation} \label{e:proofunique1}
 \overline \nabla L(x^{(1)},\Phi^{(1)})=q(x^{(1)})+\sum_{j=1}^r \Phi_j^{(1)} 
\overline \nabla h_j(x^{(1)})=0,
\end{equation}
where $\Phi_j^{(1)}\geq 0$ is unique for each $j$.
Assume there are two different NE points $x^{(0)}$ and $x^{(1)}$. Then, one can also
write the counterpart of (\ref{e:proofunique1}) for $x^{(0)}$. 
Following an argument similar to the one in the proof of Theorem 2 in~\cite{rosen},
one can show that this leads to a contradiction. We present a brief outline of a simplified
version of that proof for the sake of completeness. 

Multiplying (\ref{e:proofunique1}) and its counterpart for $x^{(0)}$ from left by
$(x^{(0)}-x^{(1)})^T$, and then adding them together, we obtain
\begin{equation} \label{e:contradict}
\begin{array}{rcl}
 0 & = & (x^{(0)}-x^{(1)})^T \overline \nabla L(x^{(1)},\Phi^{(1)}) \\
  & & +  \left( \overline \nabla L(x^{(1)},\Phi^{(1)})\right)^T (x^{(0)}-x^{(1)}) \\
  & & + (x^{(1)}-x^{(0)})^T \overline \nabla L(x^{(0)},\Phi^{(0)})  \\ \\
  &= & (x^{(0)}-x^{(1)})^T \left( q(x^{(1)})-  q(x^{(0)}) \right) \\
  & & + \left( q(x^{(1)})-  q(x^{(0)}) \right)^T (x^{(0)}-x^{(1)}) \\
  & & +  (x^{(1)}-x^{(0)})^T \sum_{j=1}^r  [\Phi_j^{(1)} \overline \nabla h_j(x^{(1)}) \\
  & & - \Phi_j^{(0)} \overline \nabla h_j(x^{(0)})] .\\
\end{array}
\end{equation}

Define the strategy vector $x(\theta)$ as a
convex combination of the two equilibrium points $x^{(0)}\,,\,x^{(1)} $ :
$$
  x(\theta)=\theta x^{(1)}  + (1-\theta) x^{(0)}  ,
$$
where $0<\theta<1$. Take the derivative of $q(x(\theta))$ with respect to $\theta$,
\begin{equation}\label{e:gdiff}
  \dfrac{dq(x(\theta))}{d\theta}=Q(x(\theta)) \frac{dx(\theta)}{d\theta}=Q(x(\theta))(x^{(1)} -x^{(0)}),
\end{equation}
where $Q(x)$ is defined in~(\ref{e:g1}). Integrating~(\ref{e:gdiff}) over
$\theta$ yields
\begin{equation}\label{e:gintegral}
  q(x^{(1)})-q(x^{(0)})=\left[\int_0^1 Q(x(\theta)) d\theta \right](x^{(1)}-x^{(0)} ) .
\end{equation}
Multiplying (\ref{e:gintegral})
from left by $(x^{(1)}-x^{(0)} )^T$, the transpose of (\ref{e:gintegral}) from right by
$(x^{(1)}-x^{(0)} )$, and adding these two terms yields
\begin{equation}\label{e:gintegral2}
  (x^{(1)}-x^{(0)})^T \left[\int_0^1 Q(x(\theta))+Q^T(x(\theta)) d\theta \right](x^{(1)}-x^{(0)} ) .
\end{equation}
Since $Q(x(\theta))+Q^T(x(\theta)) $ is positive definite by Assumption~\ref{assm3}
and the sum of two positive definite matrices is positive definite,
the matrix $\bar Q:=\int_0^1 Q(x(\theta))+Q^T(x(\theta))  d\theta$ is positive definite.

Similarly, we have
\begin{equation}\label{e:hdiff}
  \dfrac{d \overline \nabla h(x(\theta))}{d\theta}=H(x(\theta)) \frac{dx(\theta)}{d\theta}=H(x(\theta))(x^1-x^0),
\end{equation}
where $H(x)$ is the Jacobian of $\overline \nabla h(x)$ and positive definite due to convexity of $h(x)$
by definition. The third term in (\ref{e:contradict})
$$ \begin{array}{r}
(x^{(0)}-x^{(1)} )^T \sum_{j=1}^r [\Phi_j^{(0)}\overline  \nabla h_j(x^{(0)})-\Phi_j^{(1)} 
  \overline  \nabla h_j(x^{(1)})],
\end{array}
$$
is less than 
$$ \sum_{j=1}^r [\Phi_j^{(1)}-\Phi_j^{(0)}] [h_j(x^{(1)})-h_j(x^{(0)})] ,$$
due to convexity of $h(x)$. Since for each constraint $j$, $h_j(x) \leq 0\; \forall x$, $\Phi_j^{(i)} h_j(x^{(i)})=0,\; i=0, 1$,  and
$\Phi_j$ is positive definite, where the latter two follow from Karush-Kuhn-Tucker
conditions, this term is also non-positive.

The sum of the first two terms in (\ref{e:contradict}) are the negative of (\ref{e:gintegral2}), which
is strictly positive for all $x^{(1)} \neq x^{(0)}$. Hence, (\ref{e:contradict})
is strictly negative which leads to a contradiction unless $x^{(1)}=x^{(0)}$. Thus, there exists a unique NE point in the class of games $\G2$.

\bibliographystyle{IEEE}

\end{document}